\newcommand{\prob}{\mathrm{\textit{P}}}
\begin{document}

\setlength\abovedisplayskip{0pt}
\theoremstyle{plain}
\newtheorem{theorem}{Theorem} 
\newtheorem{lemma}{Lemma} 
\newtheorem{corollary}{Corollary}  
\newtheorem{remark}{Remark} 
\newcommand{\mynote}[2]{\textcolor{blue}{\fbox{\bfseries\sffamily\scriptsize#1}}
  \textcolor{blue}{{$/*$\textsf{\emph{#2}}$*/$}}}
\newcommand{\Yath}[1]{\mynote{Yathreb}{#1}}
\newcommand{\Fatm}[1]{\mynote{Fatma}{#1}}
\newcommand{\Juli}[1]{\mynote{Julie}{#1}}

\title{Spatiotemporal Modelling of Multi-Gateway LoRa Networks with Imperfect SF Orthogonality}

\author{\IEEEauthorblockN{Yathreb Bouazizi, Fatma Benkhelifa, Julie McCann\\
	\small  Imperial College London, London, UK\\
	\small \{y.bouazizi18,f.benkhelifa,j.mccann\}@imperial.ac.uk  \thanks{This work is supported by the Singapore Ministry of National Development and the National Research Foundation, Prime Minister’s Office under the Land and Liveability National Innovation Challenge (L2NIC) Research Programme (L2 NIC Award No. L2NICTDF1-2017-3). Any opinions, findings, and conclusions or recommendations expressed in this material are those of the author(s) and do not reflect the views of the Singapore Ministry of National Development and National Research Foundation, Prime Minister’s Office, Singapore.}}}

\maketitle
\begin{abstract}\
Meticulous modelling and performance analysis of Low-Power Wide-Area (LPWA) networks are essential for large scale dense Internet-of-Things (IoT) deployments. 
As Long Range (LoRa) is currently one of the most prominent LPWA technologies, we propose in this paper a stochastic-geometry-based framework to analyse the uplink transmission performance of a multi-gateway LoRa network modelled by a Matern Cluster Process (MCP). The proposed model is first to consider all together the multi-cell topology, imperfect spreading factor (SF) orthogonality, random start times, and geometric data arrival rates. Accounting for all of these factors, we initially develop the SF-dependent collision overlap time function for any start time distribution. Then, we analyse the Laplace transforms of intra-cluster and inter-cluster interference, and formulate the uplink transmission success probability. Through simulation results, we highlight the vulnerability of each SF to interference, illustrate the impact of parameters such as the network density, and the power allocation scheme on the network performance. Uniquely, our results shed light on when it is better to activate adaptive power mechanisms, as we show that an SF-based power  allocation that approximates LoRa ADR, negatively impacts nodes near the cluster head. Moreover, we show that the interfering SFs degrading the performance the most depend on the decoding threshold range and the power allocation scheme.
\end{abstract}
\begin{IEEEkeywords}
LoRa, Stochastic Geometry, imperfect SF orthogonality, random start time, collision time overlap, success probability.
\end{IEEEkeywords}

\section{Introduction}
Low-Power Wide-Area (LPWA) Networks (LPWANs) are emerging as a prominent communication solution, addressing the challenging growth, ubiquity, and diversity of the Internet-of-Things (IoT) landscape, while reconciling low-cost and low-energy requirements. LoRa is currently one of the promising solutions among emerging LPWA technologies. LoRa accommodates several tune-able technical parameters like the spreading factor (SF), which specifies the number of bits per symbol, the coding rate (CR), which determines the number of bits used for error correction, transmit power and bandwidth (Bw) \cite{loraCss}. By tuning these parameters, LoRa offers adaptive schemes that can answer different IoT scenarios and applications requirements. 
It is important to understand how such parameters affect performance. 

Indeed several studies have looked into LoRa performance analysis and optimisation \cite{LoraZijin,CanLoRascale, Imperfect_ortho,LoRaThroughput,caillouet:hal-02267218,LoraMulticell,SFAllocation,LoRaSGRain,thresh,LoraenergyHarv}. 
However, most of these studies assume perfect SF-orthogonality and almost exclusively limit their investigations to the impact of interference coming from the nodes using the same SF. From this perspective, a LoRa network can be interpreted as the aggregation of independent sub-networks, each operating in a different SF. 
Under the aforementioned assumption, the performance of a multi-cell LoRa network coexisting with other unlicensed radio technologies was studied in \cite{LoraZijin}. 
In \cite{CanLoRascale}, the scalability analysis of a single LoRa cell was provided. However, the outage condition was formulated based only on the dominant interfering signal.

The assumption of perfect orthogonality has been empirically questioned in \cite{Imperfect_ortho}. Few research studies have, hence, begun to consider non-perfect or quasi-orthogonality use cases. Among these studies, some works explored geometry-less schemes like \cite{LoRaThroughput,caillouet:hal-02267218}, while other works used a geometry-based approach like \cite{LoraMulticell}, which modelled a multi-cell LoRa network using two different cluster processes: Matern Cluster Process and Matern Hard core Process. Besides, in their signal-to-interference and noise ratio (SINR) formulations, most of the works considered co-subchannel rejection thresholds between each two SFs by considering the interference from only one SF-set \cite{LoRaThroughput, LoraMulticell,SFAllocation}. The first concern regarding such thresholds is that they are empirical and hence not unique.
For instance, the values empirically validated in \cite{Imperfect_ortho} and adopted in \cite{LoraMulticell} are different from those in \cite{thresh} which are used in \cite{SFAllocation}. 
The second concern about this approach is whether or not it captures correct decoding methods at the gateway level. Does a LoRa gateway decode the signal using a pairwise-scheme based on the SF value of the interfering packet? For these reasons, we choose to conduct the analysis following a more general approach by varying the range of decoding thresholds and considering interference from all the SFs.

The absence of coordination between nodes in LoRa Aloha-like asynchronous system leads to an interfering power that changes over time. Although essential, especially with the SF-related variable packet's time on-air (ToA), interference time dependence has been generally underestimated and neglected in LoRa network analysis. Only a few studies have integrated it, such as \cite{SFAllocation} \cite{LoRaSGRain}. In \cite{SFAllocation} a collision time probability distribution was formulated based on the difference between uniform start times and used to analyse SF allocation in a single gateway topology assuming the rejection thresholds previously mentioned. In \cite{LoRaSGRain}, a spatiotemporal density was used to study a single gateway under only co-SF interference which resulted in treating LoRa like pure Aloha. 

In this paper, we aim to bridge these research gaps by considering the analysis of LoRa uplink transmissions in a multi-cell topology with imperfect orthogonality between different SFs and random transmission start times. We use stochastic geometry, which is known for its ability to capture different sources of randomness within the network \cite{SG-TUTO}. Our main contributions are summarized as follows:
\begin{itemize}
    \item A Novel spatiotemporal mathematical model is presented for a multi-gateway LoRa network; it accounts for the imperfect SF-orthogonality and the collision overlap time. 
    \item The SF-based collision overlap time function is formulated for random transmission start times.
    
    \item A general analytical expression of the transmission success probability is derived; it can scale down to particular cases and other published works.
    
    \item The vulnerability of SFs to interference is assessed, and their relationship to one another performance is analyzed.
    
   \item The network parameters that impact the success transmission probability, and hence the scalability of the network are studied, including node density, power allocation schemes, and decoding thresholds.
\end{itemize}

\section{System Model}\label{sec2}

In this paper, we use a Matern Cluster Process (MCP) to model a multi-gateway LoRa network. This cluster process allows us to account for the clustered-nature of LoRa, as an operator-free potentially unplanned technology. According to this cluster process, LoRa gateways $L_i$ are distributed following a homogeneous Poisson Point Process (PPP) $\Phi_{G}=\{y_i,i=1,2,\dots \}$ with intensity $\lambda_{G}$, where $y_i\in \mathbb{R}^2$ is the location of the $i$'th LoRa gateway. Each cluster $C_i$ centred is at $L_i$ and has a radius $R$. Within the area of each cluster, LoRa end-devices (EDs) are uniformly scattered around $L_i$ and form a PPP $\Phi_{ED,i}=\{x_{ij},j=1,2,\dots \}$ of intensity $\lambda_{ED}$, where $x_{ij}\in \mathbb{R}^2$ is the location of the $j$'th LoRa ED in the $i$'th  cluster. The overall superposition of $\Phi_{ED,i}$ captures the position of all the children nodes and gives the desired MCP-based network. 

Furthermore, each LoRa ED can be assigned to an SF in $\mathcal{S}$=$\mathbf{\{SF_1, \dots,SF_N\}\ }$, where $N$ is the total number of available SFs. We adopt an equal-interval-based (EIB) SF allocation scheme for which each cluster $C_i$ is divided into $N$ annuli $A_q$ delimited by $d_{q-1}$ and $d_q$, where $q$ $\in$ $\mathcal{Q}$=$\mathbf {\{\ 1,2,..., N\}\ }$ is standing for the $q$'th SF. 
Each annulus $A_q$ is of width $\omega = \frac{R}{N}$ and hence $d_{q-1}=(q-1)\omega$ and $d_q=q\omega$. The average nodes number in each annulus is $N_{q}=\lambda_{ED}\pi (d_{q}^2-d_{q-1}^2)$. 
The overall spatio-temporal model of the network can be interpreted as an independently marked process where the ground process is formed by the nodes positions and the marks represent the transmission start time of each node \cite{SG-ALOHA}. 
The time marks are independent since the medium access technique used by LoRa is un-slotted Aloha-like where nodes send their packets independently without any prior coordination or synchronization. At each device, the packets are generated according to a geometric distribution with parameter $a \in [0, 1]$. By virtue of the independent thinning of a homogeneous PPP \cite{SG-book}, the subset of transmitting nodes form a homogeneous PPP $\tilde{\Phi}_{ED,i}$ of intensity a$\lambda_{ED}$ (See Fig.\ref{Fig:Network}).
We consider a power-law path-loss propagation model where the signal attenuates with the propagation distance at the rate $r^{-\eta}$, $\eta > 2$ is the path-loss exponent. Added to the large scale fading, we have Rayleigh block fading channels with unit mean exponentially distributed channel gains $g_{ij}$, i.e. $g_{ij} \sim exp(1)$. All the channels are assumed to be independent of the space and time dimensions.
\graphicspath{ {Sections/Plots/} }
\begin{figure}[!h]
\centering

\includegraphics[scale=.35]{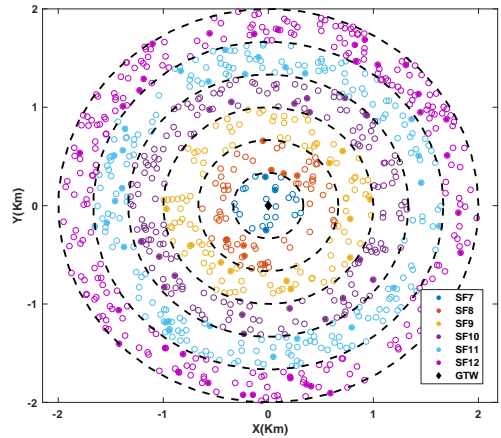}
\caption{Example of EIB SF allocation in a single-gateway LoRa network with active (filled dots) and inactive (empty dots) nodes, with $a=0.1$, $\lambda_{ED}=80$ nodes/Km$^2$, and $R=2$ Km.}
\label {Fig:Network}
\end{figure}

\section{Stochastic Geometry Analysis }\label{sec3}
The received Signal to Interference and Noise Ratio (SINR) at the typical LoRa receiver from a typical LoRa node, located at $r_{0}= \lVert x_{00} \rVert$ and emitting with SF $q_0\in \mathcal{Q}$, is formulated as:
    \begin{equation}\label{SINR}
    {SINR(r_{0},q_{0}) = \frac {P_t g_{00} \alpha r_{0}^{-\eta}}{{I_{intra}}+ I_{inter} + \sigma^2}},
    \end{equation}
    where $I_{intra}$ is the intra-cluster interference coming from active nodes within the same cluster, $I_{inter}$ is the inter-cluster interference originating from transmitting nodes in other clusters, and $\sigma^2$ is the variance of the additive white Gaussian noise (AWGN). $I_{intra}$ and $I_{inter}$ account for interference from the same SF (Co-SF) and from different SFs (Inter-SF).

\subsection{SF-Dependent Collision Overlap Time}   

As LoRa uses interleaving and repetition codes, we consider an averaging over the exchanged packet duration to account for the time dependence of the interference \cite{SG-ALOHA}. In contrast to ordinary Aloha models, LoRa has a variable packet duration $l_{q}$ since the packet Time-On-Air (ToA) is linked to the SF used in the transmission. The variable time-on-air leads to an SF-dependent collision overlap time. 

We consider a typical LoRa node located at $x_{00} \in \tilde{\Phi}_{ED,0}$ emitting with SF $q_0\in \mathcal{Q}$ and communicating with a typical gateway placed at the origin. Without loss of generality, we assume that the typical LoRa node starts its desired transmission with $SF=q_{0}$ at $T_{0}=0$. The time-averaged $I_{intra}$ and $I_{inter}$ interference experienced by the receiver are given by:
    \begin{align}
    \label{MeanInterf}
    I_{intra}&=I^{mean}_{intra} =\frac{1}{l_{q_0}} \int_{T_{0}}^{T_{0}+l_{q_0}}I_{intra}(t)dt\nonumber\\
    &=\sum_{q \in {Q}} \sum_{x\in\tilde{\Phi}_{ED,0} \backslash{x_{00}} } \mathds{1}_{j,q} P_{q}\alpha \lVert x_{0j} \rVert^{-\eta} h_{q_{0},q}(T_{ij})g_{0j},\\
    I_{inter}&=I^{mean}_{inter} =\frac{1}{l_{q_0}} \int_{T_{0}}^{T_{0}+l_{q_0}}I_{inter}(t))dt\nonumber\\&=
    \sum_{q \in {Q}} \sum_{ y \in{\Phi_{G}\backslash{y_0}}}\sum_{ x\in\ {\tilde{\Phi}_{ED,i}}}\mathds{1}_{j,q}P_{q}\alpha\lVert y_{i}+x_{ij} \rVert^{-\eta} h_{q_{0},q}(T_{ij})g_{ij},\label{MeanInterf1}
    \end{align}
    where $\mathds{1}_{j,q}$ is the indicator function of $ED_{j}$ transmitting at SF ${q}$ and  $h_{q_{0},q}(T_{ij})$ is the collision overlap time function between the LoRa node located at $x_{ij}\in \tilde{\Phi}_{ED,i}$ with random transmission start time $T_{ij}$ and the typical user. $h_{q_{0},q}(T_{ij})$ is expressed as:
    \begin{align} 
    h_{q_{0},q}(T_{ij})&= \frac{1}{l_{q_0}} \int_{T_{0}}^{T_{0}+l_{q_0}}{ \mathbb{1}\big ( \; x_{ij}\textrm{ overlaps with} \; x_{00} \big )}\quad d(t),
    \end{align}
 
Because of duty cycle restriction where a node is active only for \%1 and since $l_{6}< 100\times l_{1}$, a desired packet will not be interfering with a first and second transmissions from the same node. Assuming all the active nodes (except the typical user) start transmitting randomly in a contention window  $[-T_{c},T_{c}]$.
    The collision overlap time is expressed in the following lemma:
    \begin{lemma}\label{lemcol}
    The collision overlap time function $h_{q_0,q}(T_{ij})$ between the desired node $x_{00}$ and the interfering node $x_{ij}$ transmitting with SF $q$ at random time $T_{ij}$ is
        \begin{align}
            h_{q_0,q}(T_{ij})&= 
                \begin{cases}
                \frac{l_{q_0}-T_{ij}}{l_{q_0}}, \quad \text{ if } \left(l_{q_0}-l_{q}\right)^+ \leq T_{ij} \leq l_{q_0},\\
                \frac{\min\left( l_{q_0},l_{q}\right)}{l_{q_0}}, \quad \text{ if } -\left(l_{q}-l_{q_0}\right)^+\leq T_{ij}\leq  \left(l_{q_0}-l_{q}\right)^+,\\
                \frac{l_{q}+T_{ij}}{l_{q0}}, \quad \text{ if } -l_{q} \leq T_{ij} \leq -\left(l_{q}-l_{q_0}\right)^+,\\
                0, \quad \text{if } -T_{c} \leq T_{ij} < -l_{q}\text{ or } l_{q_0} <T_{ij}\leq T_{c},
                \end{cases}
        \end{align}
        where $\left(t \right)^+ = max(t,0)$.
   \end{lemma}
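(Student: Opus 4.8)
The plan is to recognize the integral defining $h_{q_0,q}(T_{ij})$ as nothing more than the normalized length of the temporal overlap between the two packets. Since the typical node begins at $T_0=0$ with on-air duration $l_{q_0}$, its packet occupies the interval $[0,l_{q_0}]$; the interfering node begins at $T_{ij}$ with duration $l_q$, so its packet occupies $[T_{ij},T_{ij}+l_q]$. Within the integration range $t\in[0,l_{q_0}]$ the desired packet is always on-air, so the indicator equals $1$ exactly when the interfering packet is simultaneously on-air, i.e.\ when $T_{ij}\le t\le T_{ij}+l_q$. I would therefore rewrite
\[
h_{q_0,q}(T_{ij})=\frac{1}{l_{q_0}}\,\bigl|\,[0,l_{q_0}]\cap[T_{ij},T_{ij}+l_q]\,\bigr|,
\]
reducing the entire claim to evaluating the length of an intersection of two real intervals.

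The next step is to compute this length explicitly. The intersection is empty precisely when $T_{ij}<-l_q$ or $T_{ij}>l_{q_0}$, which (after intersecting with the contention window $[-T_c,T_c]$) yields the zero branch of the lemma. On the complementary range the left endpoint of the intersection is $\max(0,T_{ij})$ and the right endpoint is $\min(l_{q_0},T_{ij}+l_q)$, so the overlap length equals $\min(l_{q_0},T_{ij}+l_q)-\max(0,T_{ij})$. I would then split according to the sign of $T_{ij}$ and according to whether $T_{ij}+l_q$ exceeds $l_{q_0}$, which determines which of the two $\min$/$\max$ arguments is active.

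The heart of the argument is to arrange these subcases so that a single $(\cdot)^+$ notation simultaneously covers the regime $l_{q_0}\ge l_q$ (shorter interferer) and $l_{q_0}<l_q$ (longer interferer). The two crossover thresholds are $T_{ij}=(l_{q_0}-l_q)^+$ and $T_{ij}=-(l_q-l_{q_0})^+$: exactly one of them equals $0$ while the other carries the signed gap $l_{q_0}-l_q$. Above the upper threshold the right endpoint is pinned at $l_{q_0}$ and the left at $T_{ij}$, giving $(l_{q_0}-T_{ij})/l_{q_0}$; below the lower threshold the left endpoint is pinned at $0$ and the right at $T_{ij}+l_q$, giving $(l_q+T_{ij})/l_{q_0}$; and in the central band one packet is fully nested inside the other, giving the saturated value $\min(l_{q_0},l_q)/l_{q_0}$. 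I would verify that these three regions tile $[-l_q,l_{q_0}]$ with no gap or overlap and that the branch values agree at the thresholds (e.g.\ both adjacent branches collapse to $\min(l_{q_0},l_q)/l_{q_0}$), and that the endpoints $T_{ij}=-l_q$ and $T_{ij}=l_{q_0}$ correctly give $0$.

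The main obstacle I anticipate is not any single computation but the bookkeeping of the two regimes: one must confirm that the $(\cdot)^+$ cutoffs collapse to the correct sub-intervals when $l_{q_0}\ge l_q$ and when $l_{q_0}<l_q$, and that the piecewise expression is continuous across the thresholds in both cases. I would defuse this by writing out the two regimes separately as a sanity check—specializing $(l_{q_0}-l_q)^+$ and $(l_q-l_{q_0})^+$ in each—and confirming that both specializations reduce to the stated unified formula.
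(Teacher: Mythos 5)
Your proposal is correct and follows essentially the same route as the paper: the paper's proof is a direct case enumeration of the packet-overlap length, split into the regimes $l_q \leq l_{q_0}$ and $l_q > l_{q_0}$, which is exactly the geometric overlap computation you carry out. If anything, your write-up is more complete, since you derive the branches from the interval-intersection identity $\min(l_{q_0},T_{ij}+l_q)-\max(0,T_{ij})$ and verify the $(\cdot)^+$ unification, whereas the paper simply lists the two regime-specific piecewise formulas without derivation.
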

    
    \begin{proof}
    The proof is in Appendix \ref{appencol}.
    \end{proof} 
    \begin{corollary}\label{corrEcol}
    Using Lemma \ref{lemcol}, and assuming that the transmission starting time of the interfering nodes is uniformly distributed between $[-T_{c},T_{c}]$, $T_{ij}\overset{Dist}{=}U(-T_{c},T_{c})$, we show that
        \begin{align}
            \mathbb{E} _{T_{ij}}\left[\frac{1}{ 1+ u h_{q_0,q}\left(T_{ij}\right) } \right]
            &= 1 - \frac{l_{q_0}+l_{q}}{2T_c} + \frac{ l_{q_{0}} }{T_{c} u} \log\left( u \frac{ min\left(l_{q},l_{q_0} \right) }{l_{q_0} } + 1 \right)\notag\\ 
            &+ \frac{\left\vert l_{q_0}-l_{q} \right\vert }{ 2 T_c \left( 1 +  u\frac{ min\left(l_{q},l_{q_0}\right) }{ l_{q_0} } \right)},\label{eqcorrEcol}
        \end{align}
        where $\mathbb{E}_{T_{ij}}[\cdot]$ is the expectation operator with respect to $T_{ij}$.
  \end{corollary}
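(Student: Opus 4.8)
The plan is to write the expectation as an explicit integral against the uniform density and then evaluate it piecewise using the four-branch description of $h_{q_0,q}$ supplied by Lemma~\ref{lemcol}. Since $T_{ij}\sim U(-T_c,T_c)$ has density $\frac{1}{2T_c}$ on $[-T_c,T_c]$, I would start from
\begin{equation*}
\mathbb{E}_{T_{ij}}\!\left[\frac{1}{1+u\,h_{q_0,q}(T_{ij})}\right]=\frac{1}{2T_c}\int_{-T_c}^{T_c}\frac{dt}{1+u\,h_{q_0,q}(t)},
\end{equation*}
and split the integration domain into the four branches: the zero region, the constant plateau, and the two linear ramps. Because the integrand depends on $t$ only through $h_{q_0,q}(t)$, each branch can be handled independently and summed.

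First I would dispatch the zero region. On $[-T_c,-l_q)\cup(l_{q_0},T_c]$ we have $h_{q_0,q}=0$, so the integrand equals $1$; the combined length of these two subintervals is $2T_c-l_q-l_{q_0}$, which contributes exactly $1-\frac{l_{q_0}+l_q}{2T_c}$, the leading terms of \eqref{eqcorrEcol}. Next, on the plateau where $h_{q_0,q}=\frac{\min(l_q,l_{q_0})}{l_{q_0}}$ is constant, the integrand is constant as well. The subtle point here is that the interval $[-(l_q-l_{q_0})^+,(l_{q_0}-l_q)^+]$ has length exactly $|l_{q_0}-l_q|$ regardless of which of $l_q,l_{q_0}$ is larger, so this region produces the final term $\frac{|l_{q_0}-l_q|}{2T_c(1+u\frac{\min(l_q,l_{q_0})}{l_{q_0}})}$ immediately, with no case split surviving.

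The logarithmic term comes from the two linear ramps. On the falling ramp I would substitute $s=l_{q_0}-t$ and on the rising ramp $s=l_q+t$; in both cases the integral reduces to $\int_0^{m}\frac{ds}{1+us/l_{q_0}}$ with the \emph{same} upper limit $m=\min(l_q,l_{q_0})$, because $l_{q_0}-(l_{q_0}-l_q)^+$ and $l_q-(l_q-l_{q_0})^+$ both collapse to $\min(l_q,l_{q_0})$. Each such integral evaluates to $\frac{l_{q_0}}{u}\log\!\left(1+\frac{u\min(l_q,l_{q_0})}{l_{q_0}}\right)$, and since the two ramps contribute equally, their sum against the prefactor $\frac{1}{2T_c}$ yields the third term $\frac{l_{q_0}}{T_c u}\log\!\left(u\frac{\min(l_q,l_{q_0})}{l_{q_0}}+1\right)$.

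The only genuine obstacle is bookkeeping the $(\cdot)^+$ operators so that the case distinction $l_{q_0}\geq l_q$ versus $l_{q_0}<l_q$ disappears from the final expression. The two observations that make this work are that the plateau width is always $|l_{q_0}-l_q|$ and that both ramp integrals share the upper limit $\min(l_q,l_{q_0})$; these are precisely what permit a single closed form without cases. Summing the four branch contributions then reproduces \eqref{eqcorrEcol}.
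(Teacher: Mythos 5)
Your proof is correct and follows essentially the same route as the paper: direct piecewise integration of the expectation against the uniform density $\tfrac{1}{2T_c}$, decomposed over the zero region, the plateau, and the two linear ramps of Lemma~1. The only difference is presentational --- the paper splits into the cases $l_q \leq l_{q_0}$ and $l_q > l_{q_0}$ and evaluates each separately before the two results merge into the stated $\min/|\cdot|$ form, whereas your $(\cdot)^+$ bookkeeping keeps the computation unified so the case distinction never surfaces; both yield the same four contributions.
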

    \begin{proof}
    The proof is in Appendix \ref{appencorrEcol}.
    \end{proof}
\subsection{Transmission Success Probability}
The typical LoRa gateway is able to receive and successfully decode the desired signal if its instantaneous SINR surpasses a reference decoding threshold $\gamma_{th}$ as

    \begin{align}
        P_{Succ}(r_{0},q_{0})&= \prob \{ SINR (r_{0},q_{0})\geq \gamma_{th}\}
                \nonumber\\
                &\overset{(a)}{=} e^{-\rho\sigma^2}\mathbb{E}_{I_{intra}}\{e^{-\rho I_{intra}}\}  \mathbb{E}_{I_{inter}}\{e^{-\rho I_{inter}}\}\nonumber\\
            &=e^{-\rho\sigma^2}\mathfrak{L}_{I_{intra}}(\rho)\mathfrak{L}_{I_{inter}}(\rho),
          \end{align}
   where $\rho=\frac{\gamma_{th} r_{0}^{\eta}}{P_{q_{0}}\alpha}$, (a) was obtained using the exponential distribution of the channel $g_{00}$, and $\mathfrak{L}_{I_{intra}}(\cdot)$ and $\mathfrak{L}_{I_{inter}}(\cdot)$ are the Laplace transforms of $I_{intra}$ and $I_{inter}$, respectively.
 
   
     In order to derive the expression of the success probability, we need first to investigate the expressions of the Laplace transforms of $I_{intra}$ and $I_{inter}$.
  \begin{theorem}\label{LIintra}
    The Laplace transform of $I_{intra}$ is given by:
        \begin{align}
        \label{EqIntra}
        \mathfrak{L}_{I_{intra}}(\rho) &\approx \prod_{q \in {Q}} \exp\Bigg(- 2\pi a\lambda_{ED}\left(I_{1}(q)-I_{2}(q)-I_{3}(q)\right)\Bigg),
        \end{align} 
        with $I_{1}(q)=\frac{ \left(l_{q_{0}}+l_{q}\right)}{4T_{c}}\left(d_{q}^2-d_{q-1}^2\right)$, 

            \begin{align}
            I_{2}(q) &=\frac{min(l_{q},l_{q_{0}})}{2T_{c}b{ (\eta+2)}}\Biggl[ d_{q}^2 \Bigg(\eta b \, _2F_1\left(1,-\frac{2}{\eta};\frac{\eta-2}{\eta};-b d_{q}^{-\eta}\right)\notag\\
            &+2 d_{q}^\eta \log \left(b d_{q}^{-\eta}+1\right)\Bigg) -d_{q-1}^2 \Bigg(\eta b \, _2F_1\left(1,-\frac{2}{\eta};\frac{\eta-2}{\eta};-b d_{q-1}^{-\eta}\right)\notag\\&+2 d_{q-1}^\eta \log \left(b d_{q-1}^{-\eta}+1\right)\Bigg)\Biggr],\\
            I_{3}(q)&=\frac{\left|l_{q_{0}}-l_{q_{j}}\right|}{2T_{c}b(\eta+2)}\Biggl[d_{q}^{\eta+2}\,
            _2F_1\left(1,\frac{\eta+2}{\eta};\frac{\eta+2}{\eta}+1;-\frac{d_{q}^\eta}{b}\right)\\
            &-d_{q-1}^{\eta+2}\,
            _2F_1\left(1,\frac{\eta+2}{\eta};\frac{\eta+2}{\eta}+1;-\frac{d_{q-1}^\eta}{b}\right)\Biggr],\nonumber
             \end{align}
            where $_2 F_{1}(\cdot)$ is the Gaussian hypergeometric function \cite{integralsbook}, and $b=
                \frac{P_{q}}{P_{q_{0}}}\frac{ min\left( l_{q},l_{q_{0}} \right) }{l_{q_{0}}} \gamma_{th} r_{0}^{\eta}= \alpha P_q \frac{ min\left( l_{q},l_{q_{0}} \right) }{l_{q_{0}}} \rho $.
 \end{theorem}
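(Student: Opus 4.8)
The plan is to compute $\mathfrak{L}_{I_{intra}}(\rho)=\mathbb{E}\{e^{-\rho I_{intra}}\}$ by the standard stochastic-geometry recipe: factorize the transform over the spreading factors, invoke the probability generating functional (PGFL) of the point process on each annulus, and then integrate out the two marks---the Rayleigh gain $g_{0j}$ and the random start time $T_{ij}$---one after the other. First I would exploit the EIB allocation: every active node transmitting with $SF_q$ lies in the annulus $A_q=\{d_{q-1}\le r\le d_q\}$, so the indicator $\mathds{1}_{j,q}$ simply restricts the inner sum to $\tilde{\Phi}_{ED,0}\cap A_q$. Because the $A_q$ are disjoint and, by the representative-cluster (reduced Palm) property of the Poisson cluster process, the remaining active interferers of the typical cluster form an independent PPP of intensity $a\lambda_{ED}$ on each $A_q$, the exponential of a sum over disjoint regions factorizes and yields the announced product $\prod_{q\in Q}(\cdots)$.

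For a single factor I would apply the PGFL $\mathbb{E}\{\prod_{x}f(x)\}=\exp(-a\lambda_{ED}\int_{A_q}(1-f(x))\,dx)$, where the per-point contribution is the marked Laplace functional $f(x)=\mathbb{E}_{g,T}\{\exp(-\rho P_q\alpha\|x\|^{-\eta}h_{q_0,q}(T)\,g)\}$. The two expectations are taken successively. Averaging over the unit-mean exponential gain first gives $\mathbb{E}_g\{e^{-sg}\}=(1+s)^{-1}$ with $s=\rho P_q\alpha\|x\|^{-\eta}h_{q_0,q}(T)$, i.e. $\tfrac{1}{1+u\,h_{q_0,q}(T)}$ with $u=\rho P_q\alpha\|x\|^{-\eta}$. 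Averaging this over the uniform start time is exactly the content of Corollary \ref{corrEcol}, so $f$ is obtained in closed form and $1-f$ becomes the three explicit terms of \eqref{eqcorrEcol}; note that $u\,\tfrac{\min(l_q,l_{q_0})}{l_{q_0}}=b\,\|x\|^{-\eta}$ with $b$ as defined in the statement, which is precisely what makes the ensuing integrals tractable.

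It then remains to integrate $1-f$ over the annulus. Passing to polar coordinates, $\int_{A_q}(1-f)\,dx=2\pi\int_{d_{q-1}}^{d_q}(1-f(r))\,r\,dr$, and the three additive pieces of $1-f$ integrate separately. The constant term $\tfrac{l_{q_0}+l_q}{2T_c}$ integrates elementarily to $I_1(q)=\tfrac{(l_{q_0}+l_q)}{4T_c}(d_q^2-d_{q-1}^2)$, matching the stated expression; the logarithmic term $\tfrac{l_{q_0}}{T_c u}\log(b r^{-\eta}+1)$ and the rational term $\tfrac{|l_{q_0}-l_q|}{2T_c}(1+b r^{-\eta})^{-1}$ produce $I_2(q)$ and $I_3(q)$ respectively.

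The main obstacle is these last two radial integrals. I would attack $I_3(q)$ with the substitution $v=r^{\eta}$, reducing it to $\int(1+b/v)^{-1}v^{2/\eta-1}\,dv$, whose antiderivative is a Gaussian hypergeometric $\,_2F_1$ via the standard integral representation of a power times a rational factor; this yields the $d_q^{\eta+2}$ and $d_{q-1}^{\eta+2}$ boundary contributions with the stated $\,_2F_1$ arguments $-d_q^\eta/b$. For $I_2(q)$ I would first integrate by parts, moving the derivative onto the $\log$: the boundary term supplies the residual $d_q^\eta\log(b d_q^{-\eta}+1)$ pieces, while the remaining integral is again a rational one of the same family as $I_3(q)$ and closes in $\,_2F_1$. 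Matching boundary values at $d_{q-1}$ and $d_q$ to the stated prefactors is the delicate bookkeeping step. Finally, I would remark that the $\approx$ sign is inherited from the mean-interference time-averaging in \eqref{MeanInterf} and from treating the marked interferer field as an independent PPP with i.i.d. start times, so that every manipulation above is exact once those modelling assumptions are granted.
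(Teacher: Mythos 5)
Your proposal is correct and follows the same overall route as the paper's proof: per-SF factorization, PGFL of the PPP on each annulus, the exponential-gain MGF producing the kernel $\frac{1}{1+u\,h_{q_0,q}(T_{ij})}$, Corollary~\ref{corrEcol} for the start-time average, and polar-coordinate integration of the three resulting terms into $I_1(q)$, $I_2(q)$, $I_3(q)$. The one substantive difference is how you justify pulling the product over $q$ outside the expectation. The paper performs this step as an approximation via the FKG inequality (stated for $T_{ij}>0$ and extended by Monte Carlo validation), and this is precisely where its $\approx$ sign originates; you instead argue the factorization is exact, because under the EIB allocation the sub-processes $\tilde{\Phi}_{ED,0}\cap A_q$ are restrictions of one marked PPP to disjoint annuli with i.i.d.\ marks, hence independent (Slivnyak plus independence over disjoint regions). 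For the \emph{intra}-cluster transform your argument is legitimate and in fact tighter than the paper's: it would turn the approximation into an equality, leaving only the modelling assumptions (time-averaged interference, i.i.d.\ uniform start times) as sources of inexactness, exactly as you remark. Note, however, that the paper's FKG device is not vacuous in general: it is what carries over to the inter-cluster transform of Theorem~\ref{LIinter}, where the per-SF factors share the common gateway process $\Phi_{G}$ and the factorization genuinely fails to be exact, so the authors' uniform use of it across both proofs has a rationale even if it is unnecessarily conservative here. Finally, your handling of the radial integrals --- integration by parts moving the derivative onto the logarithm for $I_2(q)$, the substitution $v=r^{\eta}$ reducing the rational term of $I_3(q)$ to a standard ${}_2F_1$ form, and the identity $u\,\min(l_q,l_{q_0})/l_{q_0}=b\,\lVert x\rVert^{-\eta}$ that matches the stated $b$ --- supplies exactly the computational details that the paper's proof compresses into ``Using Corollary~\ref{corrEcol}, we obtain (\ref{EqIntra}),'' and it is consistent with the stated expressions.
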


    \begin{proof}
    The proof is in Appendix \ref{appenIintra}.
    \end{proof}

\begin{theorem}\label{LIinter}
    The Laplace transform of $I_{inter}$ is given by:
    \begin{align}
    \mathfrak{L}_{Inter} (\rho) 
        &\approx \prod_{q \in Q} exp\Bigg[-2\pi \lambda_{G}a N_{q}\frac{\pi}{\eta\sin(\pi\frac{2}{\eta})}
        \left(\alpha {P_{q}\rho }\right)^{\frac{2}{\eta}}\notag \times \frac{1}{2Tc}\\& \left(\frac{2\eta}{\eta+2}l_{q_{0}}
       \left(min(1,\frac{l_{q}}{l_{q_{0}}})\right)^{\frac{\eta+2}{\eta}}+\left(min(1,\frac{l_{q}}{l_{q_{0}}})\right)^{{\frac{2}{\eta}}}\vert l_{q_0}-l_{q} \vert\right) \Bigg].\label{EqInter}
    \end{align}
 \end{theorem}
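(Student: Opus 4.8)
The plan is to evaluate $\mathfrak{L}_{I_{inter}}(\rho)=\mathbb{E}\{e^{-\rho I_{inter}}\}$ directly from the expression (\ref{MeanInterf1}), exploiting the mutual independence of the spatial process, the channel gains $g_{ij}$, the start times $T_{ij}$, and the SF indicators $\mathds{1}_{j,q}$. The one modelling approximation underlying the ``$\approx$'' is to replace, for each SF $q$, the aggregate of the SF-$q$ active daughters belonging to \emph{all} the interfering clusters by a single homogeneous PPP on $\mathbb{R}^2$. This is legitimate because the cluster heads form a PPP of intensity $\lambda_G$ and each cluster contributes on average $aN_q$ active SF-$q$ nodes, so the superposed SF-$q$ interferers have spatial intensity $\lambda_G a N_q$; equivalently, one neglects the intra-cluster offset $x_{ij}$ against the typically much larger head distance $\lVert y_i\rVert$, so that $\lVert y_i+x_{ij}\rVert$ is governed by the cluster-head geometry. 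I expect this homogenisation step to be the main obstacle, since it is exactly where the clustered nature of the MCP is traded for analytical tractability; everything downstream is a deterministic calculation.

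First I would carry out the average over the i.i.d.\ Rayleigh gains: since $g_{ij}\sim\exp(1)$, $\mathbb{E}_{g}\{e^{-sg}\}=(1+s)^{-1}$, so each interferer contributes the factor $(1+\rho P_q\alpha\lVert z\rVert^{-\eta}h_{q_0,q}(T_{ij}))^{-1}$. Invoking the PGFL of the approximating PPP then collapses the product over nodes into an exponential of a spatial integral, giving
\begin{equation}
\mathfrak{L}_{I_{inter}}(\rho)\approx\prod_{q\in Q}\exp\!\left(-\lambda_G a N_q\,\mathbb{E}_{T}\!\left[\int_{\mathbb{R}^2}\!\frac{\rho P_q\alpha\lVert z\rVert^{-\eta}h_{q_0,q}(T)}{1+\rho P_q\alpha\lVert z\rVert^{-\eta}h_{q_0,q}(T)}\,dz\right]\right),
\end{equation}
where the $T$-expectation has been pulled outside the integral by Tonelli, the integrand being nonnegative.

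Next I would evaluate the inner planar integral for a fixed start time. Passing to polar coordinates and setting $s=\rho P_q\alpha\,h_{q_0,q}(T)$, the substitution $u=r^{\eta}$ reduces the radial integral to $\tfrac{s}{\eta}\int_0^{\infty}u^{2/\eta-1}(u+s)^{-1}\,du$, which the Euler reflection identity $\int_0^\infty u^{a-1}(u+s)^{-1}du=\pi s^{a-1}/\sin(\pi a)$ (valid for $a=2/\eta\in(0,1)$, i.e.\ $\eta>2$) closes as $\tfrac{\pi}{\eta\sin(2\pi/\eta)}s^{2/\eta}$. After restoring the $2\pi$ from the angular integration and factoring the constants $(\alpha P_q\rho)^{2/\eta}$ out of the expectation, the spatial integral becomes $2\pi\tfrac{\pi}{\eta\sin(2\pi/\eta)}(\alpha P_q\rho)^{2/\eta}\,\mathbb{E}_{T}\{(h_{q_0,q}(T))^{2/\eta}\}$, which already reproduces the prefactor of (\ref{EqInter}).

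It remains to compute $\mathbb{E}_{T}\{(h_{q_0,q}(T))^{2/\eta}\}$ with $T\sim U(-T_c,T_c)$. I would insert the piecewise form of Lemma \ref{lemcol} into $\tfrac{1}{2T_c}\int_{-l_q}^{l_{q_0}}(h_{q_0,q}(T))^{2/\eta}\,dT$ and split the support into the two linear ramps and the central plateau. The plateau has height $\min(1,l_q/l_{q_0})$ and width $\lvert l_{q_0}-l_q\rvert$, contributing $(\min(1,l_q/l_{q_0}))^{2/\eta}\lvert l_{q_0}-l_q\rvert$; each ramp, after the change of variable to the running overlap $w$, integrates to $\tfrac{\eta}{\eta+2}l_{q_0}(\min(1,l_q/l_{q_0}))^{(\eta+2)/\eta}$, and the two equal ramps combine into the $\tfrac{2\eta}{\eta+2}$ term. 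Substituting this closed form back yields exactly (\ref{EqInter}), completing the derivation.
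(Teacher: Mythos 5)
Your proof is correct and lands exactly on (\ref{EqInter}): the radial integral (your reflection identity is the same as \cite[(3.241)]{integralsbook} invoked in the paper) and the evaluation of $\mathbb{E}_{T}\{(h_{q_0,q}(T))^{2/\eta}\}$ from Lemma \ref{lemcol} as two ramps plus a plateau coincide with the paper's final steps, including the $\frac{2\eta}{\eta+2}l_{q_0}\left(\min(1,l_q/l_{q_0})\right)^{(\eta+2)/\eta}$ and $\left(\min(1,l_q/l_{q_0})\right)^{2/\eta}\vert l_{q_0}-l_q\vert$ terms. Where you genuinely differ is upstream, in how the clustered geometry is collapsed. The paper keeps the MCP structure: it factorizes over SFs via the FKG inequality, applies the \emph{exact} PGFL of the Matern cluster process to obtain an outer gateway integral of a per-cluster functional $\xi_q(s,y)$, and only then approximates --- first $\beta(x,y,\theta)\approx y$ (far field, $x\ll y$), then a first-order Taylor expansion of $\xi_q$. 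You instead postulate a single up-front homogenization: replace the SF-$q$ inter-cluster interferers by an independent homogeneous PPP of intensity $\lambda_G a N_q$ and apply the plain PPP PGFL. The two routes are equivalent --- linearizing the per-cluster PGFL exponent of a cluster process \emph{is} the Poisson approximation with matched intensity --- and your version is shorter and makes the factorization over $q$ automatic rather than resting on FKG. What the longer route buys the paper is an explicit account of the approximation regime: the far-field step needs strong clustering, the linearization needs the per-cluster contribution to be small, and FKG fixes the direction of the error. One caveat on your justification: matching the intensity and neglecting the offsets $x_{ij}$ does not by itself make the aggregate Poisson --- conditionally on $\Phi_G$, the SF-$q$ interferers clump at the gateway locations, and the SF classes are mutually dependent through the common $\Phi_G$; discarding that correlation structure is an additional approximation, precisely the one the paper's Taylor and FKG steps make explicit. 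Since the theorem carries ``$\approx$'', this does not invalidate your derivation, but the homogenization should be flagged as an assumption in its own right rather than a consequence of intensity matching.
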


 \begin{proof}
The proof is in Appendix \ref{appenIinter}.
\end{proof}
Given (\ref{EqIntra}) and (\ref{EqInter}), the general expression of  $P_{Succ} $ is given in (\ref{psucc}).


	\newcounter{mytempeqncnt4}\begin{figure*}[!t]%
	\normalsize%
	\setcounter{mytempeqncnt4}{\value{equation}}
    	\begin{align}\label{psucc}
    	    P_{Succc}\left(r_0,q_0\right) &= e^{-\rho\sigma^2}\prod_{q \in {Q}} \Bigg(e^{- 2\pi a\lambda_{ED}\left(I_{1}(q)-I_{2}(q)-I_{3}(q)\right)}
    	    e^{\frac{-2\pi^2 \lambda_{G}a N_{q}}{\eta\sin(\pi\frac{2}{\eta})}
        \left(\alpha {P_{q}\rho }\right)^{\frac{2}{\eta}} \left(\frac{1}{2Tc}\frac{2\eta}{\eta+2}l_{q_{0}}\left(min(1,\frac{l_{q}}{l_{q_{0}}})\right)^{\frac{\eta+2}{\eta}}+\left(min(1,\frac{l_{q}}{l_{q_{0}}})\right)^{{\frac{2}{\eta}}}\vert l_{q_0}-l_{q} \vert\right) }\Bigg).
    	\end{align}
	\hrulefill%
	\vspace*{2pt}
	\end{figure*}

\subsection{Special Cases}
Here, we state few special cases deduced from our analytical results that can scale down to other published works:
\begin{itemize}
    \item[](i) {Perfect Orthogonality:}
If we consider perfect orthogonality, the transmission success probability simplifies to:

$P_{Succ}(r_{0},q_{0})=e^{-\rho\sigma^2}e^{- 2\pi a\lambda_{ED}\left(I_{1}(q_{0})-I_{2}(q_{0})\right)}e^{\frac{-2\pi^2 \lambda_{G}a N_{q_{0}}l_{q_{0}}r_{0}^{2}\gamma_{th}^{\frac{2}{\eta}}}{Tc (\eta+2) \sin(\pi\frac{2}{\eta})}}.$
    \item[](ii) {Single Gateway Toplogy:} For a single gateway topology, only intra-cluster interference are considered:\\ $P_{Succ}(r_{0},q_{0})=e^{-\rho\sigma^2} \prod_{q \in {Q}} e^{(- 2\pi a\lambda_{ED}\left(I_{1}(q)-I_{2}(q)-I_{3}(q)\right))}$. 
    \item[](iii) {Only one interfering SF:} Considering the $q$'th SF:
    
$P_{Succ}(r_{0},q_{0})=e^{-\rho\sigma^2}e^{(- 2\pi a\lambda_{ED}\left(I_{1}(q)-I_{2}(q)-I_{3}(q)\right))}\times \\e^{\frac{-2\pi^2 \lambda_{G}a N_{q}}{\eta\sin(\pi\frac{2}{\eta})}
        \left(\alpha {P_{q}\rho }\right)^{\frac{2}{\eta}} \left(\frac{1}{2Tc}\frac{2\eta}{\eta+2}l_{q_{0}}\left(min(1,\frac{l_{q}}{l_{q_{0}}})\right)^{\frac{\eta+2}{\eta}}+\left(min(1,\frac{l_{q}}{l_{q_{0}}})\right)^{{\frac{2}{\eta}}}\vert l_{q_0}-l_{q} \vert\right) }$. 
    \item[](iv) {Same Power Allocation:} Assuming all the SFs use the same power, $P_{q}=P_{q_{0}}$, and $b$ in (\ref{EqIntra}) simplifies to $\frac{ min\left( l_{q},l_{q_{0}} \right) }{l_{q_{0}}} \gamma_{th} r_{0}^{\eta}$.
\end{itemize}

\section{Simulation Results}\label{sec4}

In this section, we validate our analytical model using Monte Carlo (MC) simulations. The packet size is fixed to $25$ bytes. 
The packet time-on-air depends on the used SF and is calculated, based on each SF Data Rate \cite{CanLoRascale} ($l_{1}=0.036$s, $l_{2}= 0.064$s, $l_{3}=0.113$s, $l_{4}= 0.204$s, $l_{5}=0.365$s, and $l_{6}=0.682$s).
LoRa coverage radius for dense urban environment is $2$ Km and a typical metropolitan area of $100$ km$^2$ can be covered by $30$ gateways \cite{Sim_Param-Ref}. Hence, in our simulation scenario we assumed R$=2$km and $\lambda_{G}$(/Km$^2$)$=0.3$. The bandwidth and the frequency are chosen according to LoRa regulations for the European region: Bw$=125$ KHz and f$_{c}=868$ MHz , the contention window is $Tc=1.5$ Sec. Unless otherwise mentioned, the parameters used in the simulations are: $\eta=3$, a$=0.1$, $\lambda_{ED}=100$ Nodes$/$Km$^2$ and $P_{q}=14$ dBm. To analyze the impact of power allocation on the performance, we tested two schemes: same power allocation and SF-based power allocation. For the first scheme,  $P_{q}=14$ dBm $\forall$ q; while for the second, the power is attributed according to the used SF (Higher SFs are assigned higher powers) which is close to the way LoRa Adaptive Data Rate (ADR) works \cite{LoraSpec} ($P_{1}=2$ dBm, $P_{2}= 5$ dBm, $P_{3}=8$ dBm, $P_{4}= 11$ dBm, $P_{5}=14$ dBm, $P_{6}=20$ dBm). 
To calculate the performance metric, LoRa nodes are deployed according to a MCP and kept fixed for the simulation setup which is similar to real deployment scenarios in most smart city IoT applications. The desired node position is fixed based on the SF to investigate at $r_{0}(q)=d_{q-1}+\frac{\omega}{2}$ and its transmission status remains equal to 1 (always active). At each simulation step, the interfering nodes are determined based on their data status which follows a geometric distribution; once they have data to transmit, the transmission start time of each node is randomly generated following a uniform distribution. The collision overlap time with the desired packet is then calculated and multiplied by the interfering power. For MC simulations, the transmission success probability of each SF, under both perfect/imperfect SF orthogonality, is found by averaging over the number of simulations. In all the figures of this section, markers illustrate results obtained by MC simulation.

\graphicspath{ {Sections/Plots/} }
\begin{figure}[t]
    \begin{subfigure}[b]{0.47\columnwidth}
    \includegraphics[scale=0.14]{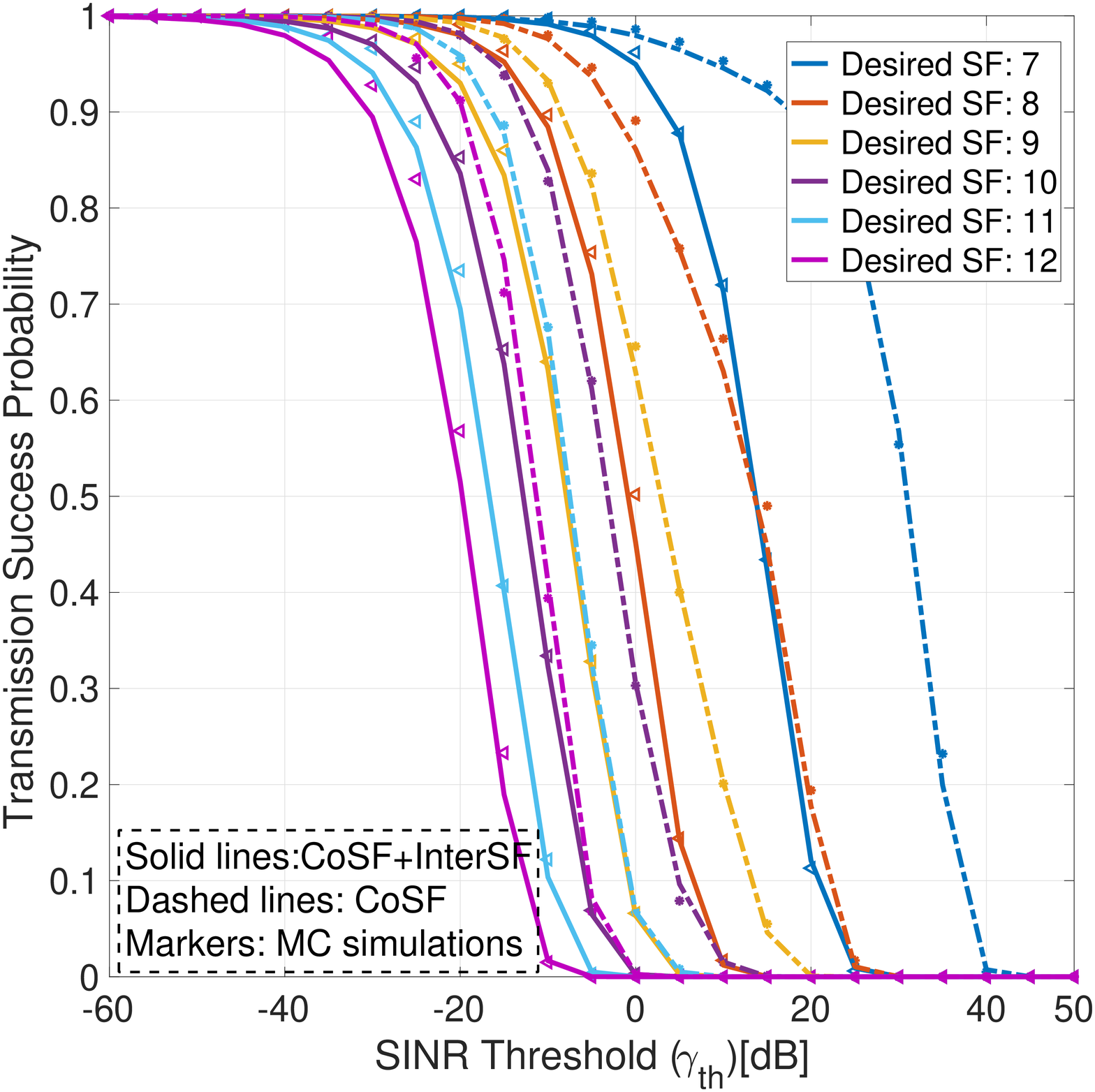}
    \label{Intra}
       \vspace{-3mm}
    \caption{Single Gateway topology}
    \end{subfigure}\hspace{2mm}
    \begin{subfigure}[b]{0.47\columnwidth}
    \includegraphics[scale=0.14]{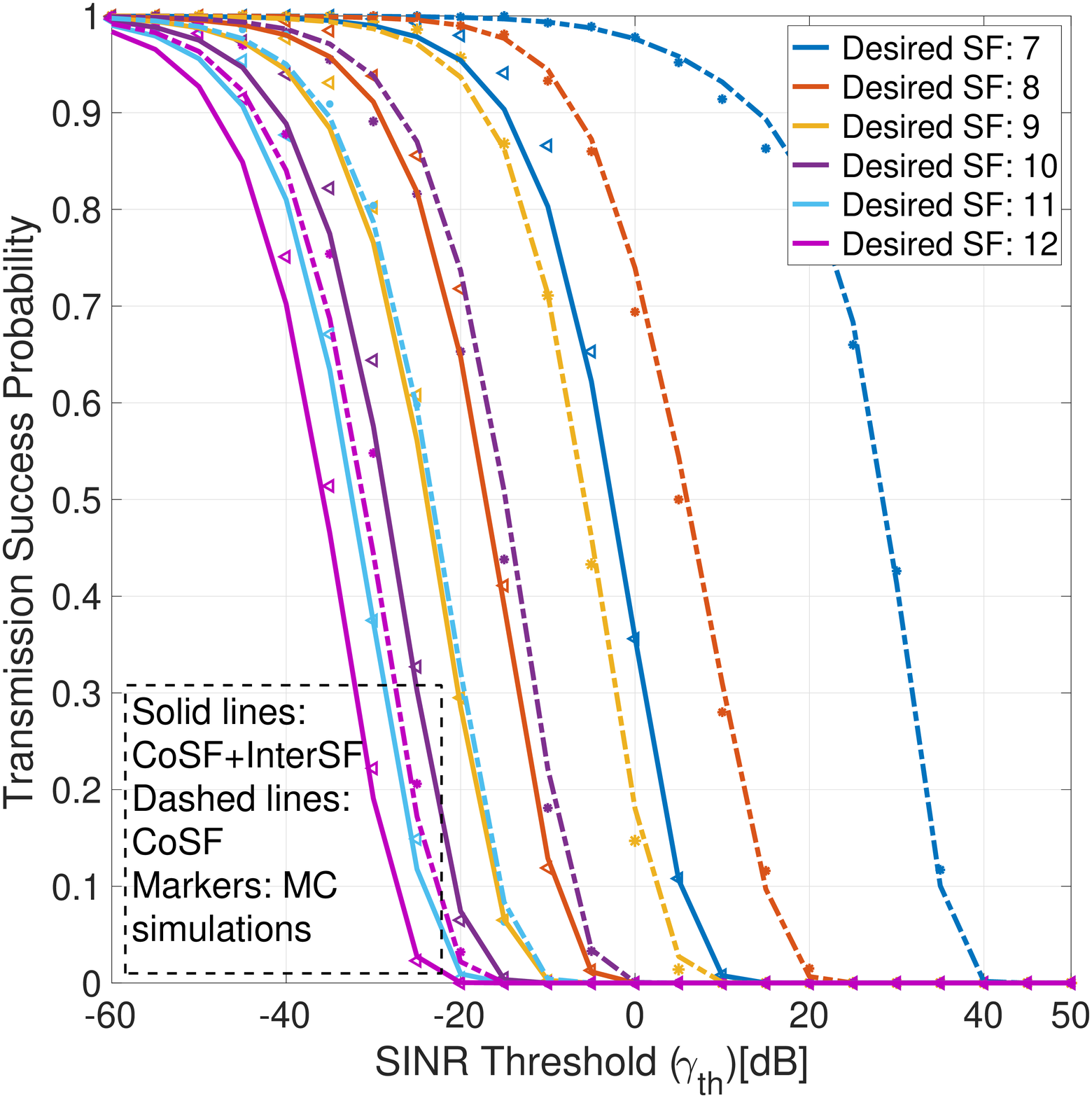}
    \label{InterIntra}
       \vspace{-3mm}
    \caption{Multiple Gateway topology}
    \end{subfigure}
    \caption{Transmission success probability ($P_{Succ}$) versus different SINR thresholds ($\gamma_{th}$).}
    \label{DiffSF}
    \vspace{-3mm}
\end{figure}
\begin{figure}[!htb]
    \centering
    \includegraphics[scale=0.22]{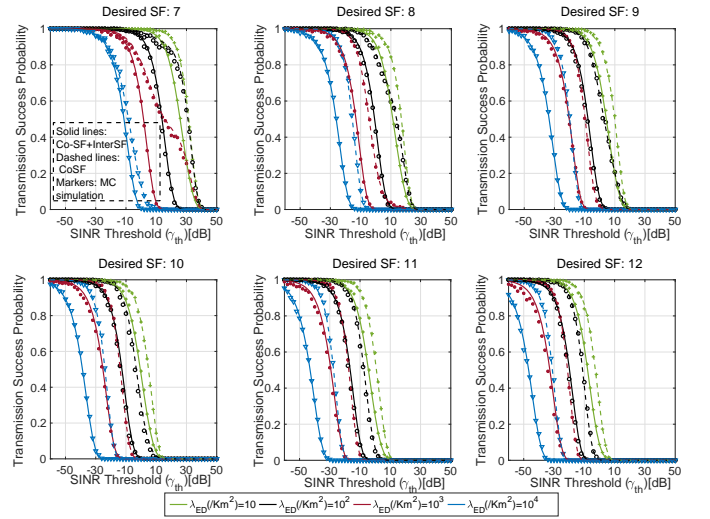}
    \caption{Transmission success probability versus SINR thresholds for different $\lambda_{ED}$ in a single LoRa cell.}
    \label{Impact_Dens}
\end{figure}

\graphicspath{ {Sections/Plots/} }
\begin{figure}[t]
    \centering
    \includegraphics[scale=0.218]{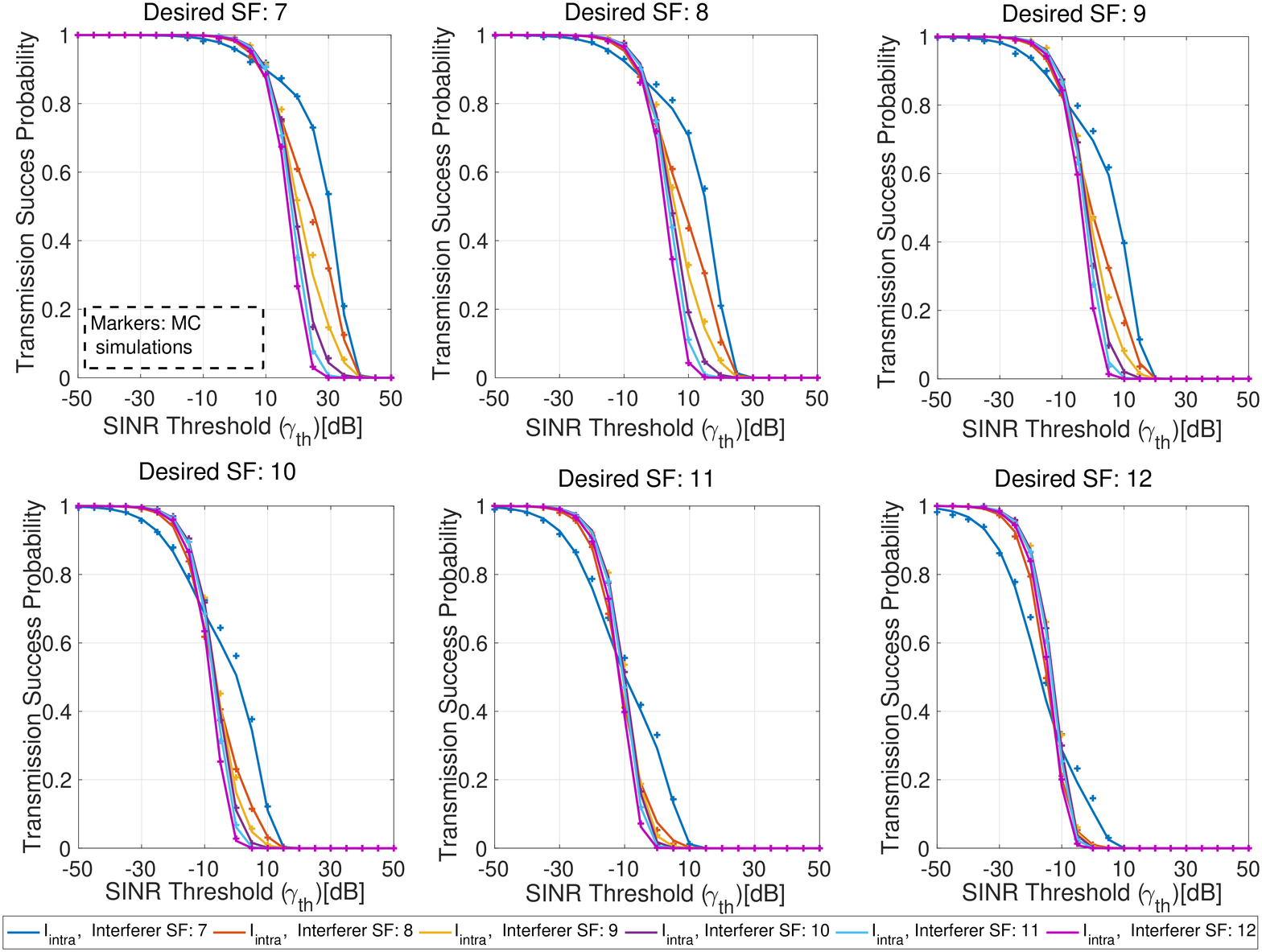}
    \caption{Transmission success probability versus SINR thresholds under Per-SF interference and same power allocation in a single LoRa cell with $\lambda_{ED}=200$ Nodes/Km$^2$.}
    \label{PerSFSamePower}
    \vspace*{-5mm}
\end{figure}

\begin{figure}[t]
    \centering
    \includegraphics[scale=0.218]{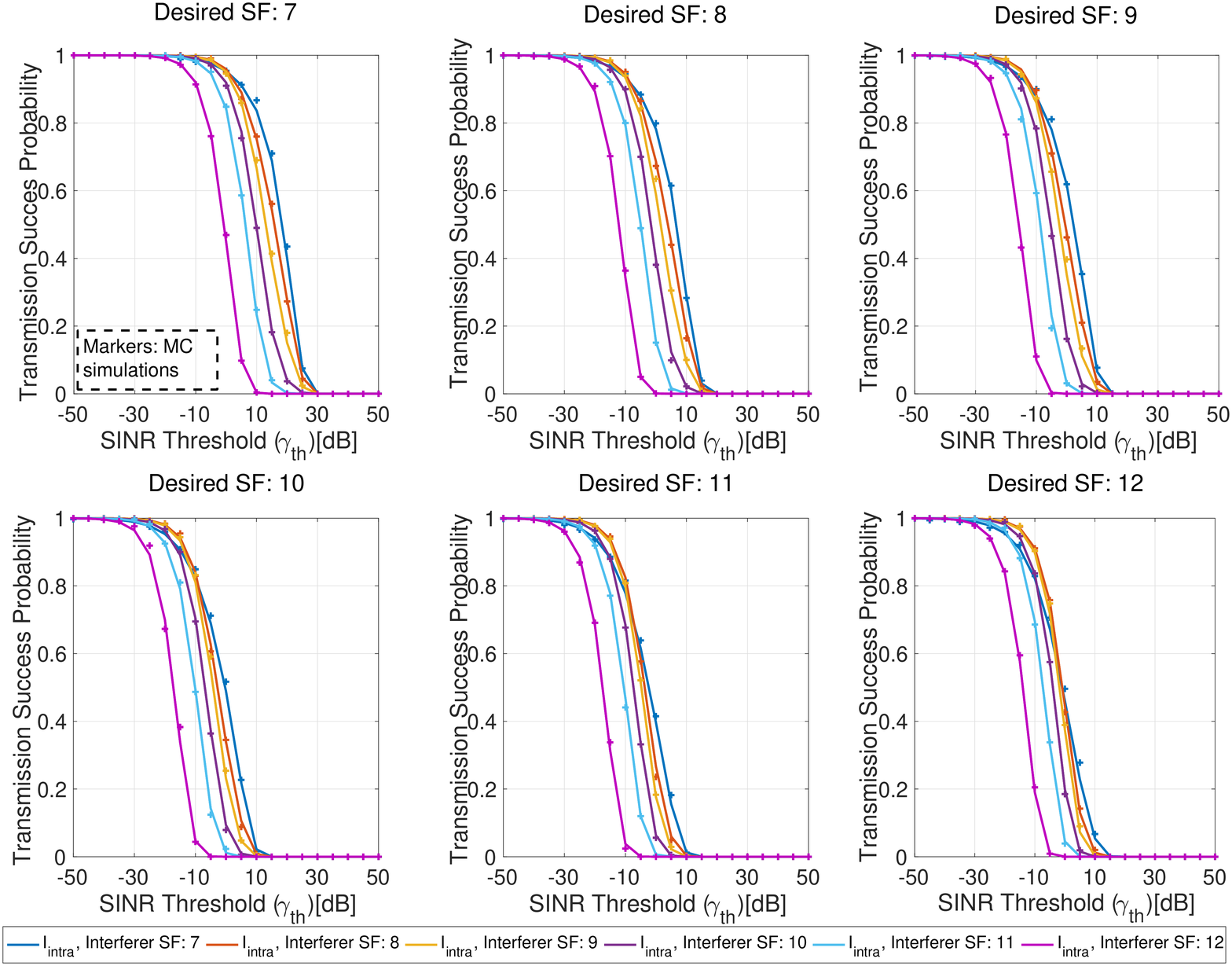}
    \caption{Transmission success probability versus SINR thresholds under Per-SF interference and SF-based power allocation in a single LoRa cell with $\lambda_{ED}=200$ Nodes/Km$^2$.}
    \label{PerSFDiffPower}
    \vspace*{-5mm}
\end{figure}

Fig.\ref{DiffSF} shows the transmission success probability of each desired SF versus different SINR thresholds for both single gateway topology and multi-gateway topology. 
Solid lines illustrate the impact of both interference types (Co-SF and Inter-SF), while dashed lines illustrate the impact of only Co-SF interference. We can see that for each SF the probability of successful transmission under aggregated interference from different SFs is considerably lower than the result obtained by considering only Co-SF interference.
Hence, we can say that the perfect orthogonality assumption commonly used results in an overestimation of the network performance which may impact the network dimensioning and planning. We can see also that, as expected, packet transmission success decreases when SF increases. This can be explained by the fact that higher SFs have longer time on air which leads to longer time overlap with the desired packet and hence higher interference exposure.
 
In Fig.\ref{Impact_Dens}, we plotted the transmission success probability of different desired SFs under both aggregated interference and only co-SF interference, for different devices densities in a LoRa gateway. Higher nodes densities degrade all the SFs performance and its impact is much important for higher SFs as it appears for SF$=11$ and SF$=12$.
 To assess the impact of SFs on one another performance, we plotted in Fig.\ref{PerSFSamePower} and Fig.\ref{PerSFDiffPower} the transmission success probability of each SF under interference from one specific interfering SF-set. 
 In Fig.\ref{PerSFSamePower}, we considered the same power allocated to all the nodes independently from the used SF; while in Fig.\ref{PerSFDiffPower}, we used the SF-based power allocation scheme which is closer to the way LoRa ADR works. 
 An examination of these figures reveals that for the case of same power allocation, at lower SINR thresholds, lower SFs tend to have the worst impact on the success probability while at higher SINR thresholds, higher SFs have the worst impact. This observation stresses the importance of SF allocation on the network performance.
 Under SF-based power allocation, only one common behaviour for all the SFs is recognized: higher SFs decrease transmission success probability more, independently of the SINR threshold. This observation is more aligned with the commonly believed fact that the higher SFs induce more interference as they stay  active for longer in the network. 
 Moreover, the SF-based power allocation decreases the performance of lower SFs and improves the performance of higher SFs, compared to the same power allocation. This shows that the preference of same or SF-based power allocation scheme depends on the SF of the desired node. 
 For instance, in the case of desired SF$=12$, for an SINR threshold equal to $-10$dB, the success probability is around $30\%$ for all interfering SFs, whereas under the SF-based power allocation the success probability overcomes $70\%$ for SFs lower than $11$. 
 
 Figs.\ref{PerSFSamePower}, \ref{PerSFDiffPower} and \ref{Impact_Dens} confirm that decoding thresholds are not fixed and do not depend only on the SF of desired and interfering nodes, they are also impacted by the nodes density and the power allocation. These thresholds decrease when the interfering SF increase and when the nodes density becomes higher.

\section{Conclusion}

Using stochastic geometry, we analysed the transmission success probability of a multi-gateway LoRa-based LPWA network. We demonstrated that limiting the analysis to the impact of Co-SF interference specifically can lead to an overestimation of the network performance. The incorporation of time dimension with the formulated collision overlap function better depicts the interference temporal dynamic and makes, as a result, the analysis more realistic. We also showed that power allocation schemes play a significant role in the vulnerability of each SF to interference and that decoding thresholds depend on several network parameters. Uniquely, our results suggest that activating an adaptive power allocation schemes like LoRa ADR would be advantageous for nodes far from the gateway more than other nodes. This observation suggests a potential future work to validate this behaviour in a real deployment scenario.

\bibliographystyle{IEEEtran}
\bibliography{Bibl}

\appendices
\section{Proof of Lemma \ref{lemcol}}\label{appencol}

Assuming an interfering node located at $x_{ij}$ that starts transmitting its packet of duration $l_{q}$ at instant $T_{ij}$ randomly in a contention window $[-T_{c},T_{c}]$. Then, the collision time overlap between $x_j$ and the typical user is given by:
\begin{itemize}
    \item If $l_{q}\leq l_{q_{0}}$,

         \[ h_{q_{0},q}(T_{ij})=
        \begin{cases}
            \frac{l_{q_{0}}-T_{ij}}{l_{q_{0}}},\quad\text{if } l_{q_{0}}-l_{q}\leq T_{ij} \leq l_{q_{0}},\\
            \frac{l_{q}}{l_{q_{0}}},\quad\text{if } 0\leq  T_{ij} \leq l_{q_{0}}-l_{q},\\ 
            \frac{l_{q}+T_{ij}}{l_{q_{0}}},\quad\text{if }  -l_{q} \leq T_{ij} \leq 0 ,\\ 
            0 \quad \text{if} -T_{c} \leq T_{ij} < -l_{q} \quad \text{or} \quad   l_{q_{0}} < T_{ij}\leq T_{c}.
        \end{cases}\] 
    \item If $l_{q}> l_{q_{0}}$,

        \[ h_{q_{0},q}(T_{ij})=
        \begin{cases}
            \frac{l_{q_{0}}-T_{ij}}{l_{q_{0}}},\quad\text{if } 0\leq T_{ij} \leq l_{q_{0}},\\
            1,\quad\text{if } l_{q_{0}}-l_{q}\leq  T_{ij} \leq 0 ,\\ 
            \frac{l_{q}+T_{ij}}{l_{q_{0}}},\quad\text{if }  -l_{q} \leq T_{ij} \leq l_{q_{0}}-l_{q},\\ 
            0 \quad \text{if}\quad   -T_{c} \leq T_{ij} < -l_{q} \quad \text{or} \quad   l_{q_{0}} < T_{ij}\leq T_{c}.
        \end{cases}\]
    \end{itemize}

\section{Proof of Corollary \ref{corrEcol}}\label{appencorrEcol}
Assuming that the transmission start time of LoRa active nodes follows a uniform distribution $T_{ij}\overset{Dist}{=}U(-T_{c},T_{c})$ and using Lemma 1, we evaluate $\mathbb{E} _{T_{ij}}\left[\frac{1}{ 1+ s h_{q_0,q}\left(T_{ij}\right) } \right]$ for $l_q\leq l_{q_0}$ as 
            \begin{align}
            &\mathbb{E}_{t}\left[ \frac{1}{1+u h_{q_0,q}(T_{ij}) } \right] 
                =\frac{1}{2T_c}\Bigg[ \int_{-T_c}^{-l_{q}} dt 
                + \int_{-l_{q}}^{0} \frac{1}{1+ u \frac{t+l_{q}}{l_{q_0}}} dt\notag\\\
                &+\int_{0}^{l_{q_0}-l_{q}} \frac{1}{1+u\frac{l_{q}}{l_{q_0}}} dt +\int_{l_{q_0}-l_{q}}^{l_{q_0}} \frac{1}{1+u\frac{l_{q_0}-t}{l_{q_0}}}  dt + \int_{l_{q_0}}^{T_c} dt\Bigg] \notag\\
                &=1-\frac{l_{q_0}+l_{q}}{2T_c} + \frac{l_{q_0}}{T_c u} \log\left( \frac{u l_{q}}{l_{q_0}} + 1 \right)+\frac{l_{q_0}-l_{q}}{2T_c \left(1+u\frac{l_{q}}{l_{q_0}} \right)}.\label{Ecorr0Q1}
            \end{align}
          
        Similarly, we show for $l_{q}> l_{q_0}$, that
            \begin{align}
            \mathbb{E}_{t}\left[\frac{1}{1+u h_{q_0,q}(T_{ij}) }\right] 
        &=1-\frac{l_{q_0}+l_{q}}{2T_c} + \frac{l_{q_0}\log(u+1)}{T_c u}  -\frac{l_{q_0}-l_{q}}{2T_c(1+u)}.
            \end{align}

\section{Proof of Theorem \ref{LIintra}}\label{appenIintra}
    \begin{align}
        \mathfrak{L}_{I_{intra}}\{s\} 
              & = \mathbb{E}_{x,G,t} \left[e^{-s\sum_{q \in Q} \sum_{x_j \in \tilde{\Phi}_{ED,0,q}\backslash{x_0}} P_q h_{q_{0},q}(T_{ij})\alpha\lVert x_{0,j}\lVert^{-\eta}g_{0j}}\right]
            \\
            & =\mathbb{E}_{x,G,t} \Bigg[\prod_{q \in Q} \prod_{x_j \in\tilde{\Phi}_{ED,0,q}\backslash{x_0}}e^{-s P_q h_{q_{0},q}(T_{ij})\alpha\lVert x_{0,j}\rVert^{-\eta}g_{0j}}\Bigg]\notag\\
            & \overset{(a)}{=} \mathbb{E}_{x,t} \Bigg[\prod_{q \in Q} \prod_{x_j \in\tilde{\Phi}_{ED,0,q}\backslash{x_0}} \frac{1}{1+sP_q h_{q_{0},q}(T_{ij})\alpha\lVert x_{0,j}\rVert^{-\eta}} \Bigg]\notag\\
         &\overset{(b)}{\approx}  \prod_{q\in {Q}}\Bigg[\mathbb{E} _{x,t} \prod_{x_j \in\tilde{\Phi}_{ED,0,q}\backslash{x_0}} \frac{1}{1+sP_{q} h_{q_{0},q}(T_{ij})\alpha\lVert x_{0,j}\rVert^{-\eta}} \Bigg]\notag\\
        &\overset{(c)}{=}\prod_{q \in Q}\Bigg[e^{-2\pi a\lambda_{ED} \int_{d_{q-1}}^{d_{q}}\big(1-\mathbb{E} _{t}\big [\frac{1}{1+s P_q h_{q_{0},q}(T_{ij}) \alpha r^{-\eta}}\big] r dr } \Bigg],\notag
    \end{align}
  where (a) is explained by the independence of channel gains from both the spatial and temporal dimensions and is obtained using the moment generating function (MGF) of the exponential distribution with mean $1$, (b) is an approximation obtained using Fortuin–Kasteleyn–Ginibre (FKG) inequality for $T_{ij}>0$ and extended to $\forall$ $T_{ij}$ through validation by MC simulations, and (c) is obtained by applying the probability generating function (PGFL) of $\tilde{\Phi}_{ED,i}$ and the change of integration coordinates from Cartesian to polar. Using Corollary \ref{corrEcol}, we obtain (\ref{EqIntra}).

\section{Proof of Theorem \ref{LIinter}}\label{appenIinter}

 \begin{align}
        &\mathfrak{L}_{I_{inter}}\{s\} \\ 
           & \overset{(a)} =\mathbb{E}_{y,x,t} \Bigg[\prod_{q \in Q}\prod_{y_i \in\Phi_{G}}\prod_{x_j \in\tilde{\Phi}_{ED,i,q}} \frac{1}{1+sP_qh_{q_{0},q}(T_{ij})\alpha\lVert {y_{i}+x_{ij}}\rVert^{-\eta}}\Bigg]\notag\\
        &\overset{(b)}\approx
        \prod_{q \in Q} 
        \mathbb{E}_{y,x,t} \Bigg[
        \prod_{y_i \in\Phi_{G}\backslash{y_0}} \prod_{x_j \in\tilde{\Phi}_{ED,i,q}}
         \frac{1}{1+s P_q h_{q_{0},q}(T_{ij})\alpha\lVert {y_{i}+x_{ij}}\rVert^{-\eta}} \Bigg]\notag\\
        &\overset{(c)}=\prod_{q \in Q} exp\left(-2\pi \lambda_{G} \int_{0}^{\infty} \big(1-\xi_{q}(s,y)\big) y dy\right) \notag,
      \label{eqkappaint}
   \end{align}
    Following similar steps to $\mathfrak{L}_{I_{intra}}$, (a) is obtained using the independence of channel gains from both the spatial and temporal dimensions and the MGF of the exponential distribution, 
    (b) is an approximation obtained using FKG inequality for $T_{ij}>0$ and extended to $\forall$ $T_{ij}$ through validation by MC simulations, 

    and (c) is obtained using the PGFL of the Matern cluster process \cite{distanceMCP} with
        \begin{align}
            \xi_{q}(s,y)=e^{-\lambda_{ED} a \int_{d_{q-1}}^{d_{q}} \int_{0}^{2\pi} \left(1-{E}_t\left(\frac{1}{1+s P_{q} h_{q_{0},q}(T_{ij})\alpha {\beta(x,y,\theta)}^{-\eta}}\right)\right)xdxd\theta },
        \end{align}
  with $\beta(x,y,\theta)={\sqrt{y^2+x^2-2xycos(\theta)}}$. 
For the case of a highly clustered network we have $x << y$. Using the approximation in Corollary 2 \cite{LoraZijin}, we have $\beta(x,y,\theta)\approx y$ and doing a  Taylor series expansion we obtain:
\begin{align}
   \mathfrak{L}_{I_{inter}} \{s\}   &=\prod_{q \in Q} e^{-2\pi \lambda_{G}\lambda_{ED} a N_{q})\int_{0}^{\infty}  \int_{-\infty}^{+\infty}\left( \frac{s P_{q} h_{q_{0},q}(t)\alpha {y}^{-\eta}}{1+s P_{q}h_{q_{0},q}(t)\alpha {y}^{-\eta}}\right)f_{T_{ij}(t)}ydt dy}\notag\\
   &\overset{(d)}=\prod_{q \in Q}e^{-2\pi \lambda_{G}\lambda_{ED} a N_{q} \frac{\pi(sP_{q}\alpha)^{\frac{2}{\eta}}}{\eta \sin{(\pi\frac{2}{\eta}})}  \int_{-\infty}^{+\infty}(h_{q_{0},q}(t))^{\frac{2}{\eta}}f_{T_{ij}(t)}dt},\notag
    \end{align} 
where (d) is obtained using \cite[(3.241)]{integralsbook}. Recalling Lemma \ref{lemcol}, we evaluate $\int_{-\infty}^{+\infty}(h_{q_{0},q}(t))^{\frac{2}{\eta}}f_{T_{ij}(t)}dt$. For $s=\rho$, we obtain the final expression in (\ref{EqInter}).

\end{document}